\pdfoutput=1
\documentclass[11pt,a4paper]{article}%
\usepackage{amsmath}
\usepackage{amsfonts}
\usepackage{amssymb}
\usepackage{graphicx}
\usepackage[onehalfspacing]{setspace}
\setcounter{MaxMatrixCols}{30}
\providecommand{\U}[1]{\protect\rule{.1in}{.1in}}
\newtheorem{theorem}{Theorem}

\newenvironment{proof}[1][Proof]{\noindent\textbf{#1.} }{\ \rule{0.5em}{0.5em}}
\textwidth=16cm
\hoffset=-1.5cm
\begin{document}

\title{An efficient asymptotic approach for testing monotone proportions assuming an
underlying logit based order dose-response model}
\author{Nirian Martin$^{1}$ and Raquel Mata$^{2}$\\$^{1}${\small Department of Statistics, Carlos III University of Madrid, 28903
Getafe (Madrid), Spain.}\\$^{2}${\small Department of Statistics and O.R., Complutense University of
Madrid, 28040 Madrid, Spain.}}
\date{}
\maketitle

\begin{abstract}
When an underlying logit based order dose-response model is considered with
small or moderate sample sizes, the Cochran-Armitage (CA) test represents the
most efficient test in the framework of the test-statistics applied with
asymptotic distributions for testing monotone proportions. The Wald and
likelihood ratio (LR) test have much worse behaviour in type error I in
comparison with the CA test. It suffers, however, from the weakness of not
maintaining the nominal size. In this paper a family of test-statistics based
on $\phi$-divergence measures is proposed and their asymptotic distribution
under the null hypothesis is obtained either for one-sided or two-sided
hypothesis testing. A numerical example based on real data illustrates that
the proposed test-statistics are simple for computation and moreover, the
necessary goodness-of-fit test-statistic are easily calculated from them. The
simulation study shows that the test based on the Cressie and Read
(\textit{Journal of the Royal Statistical Society}, \textit{Series B},
\textbf{46, }440-464, 1989) divergence measure usually provides a better
nominal size than the CA test for small and moderate sample sizes.

\end{abstract}

\textbf{Keywords}: $I\times2$ contingency table, order-restricted inference,
dose-response logit model, Cochran-Armitage test, phi-divergence test statistic

\section{Introduction\label{Sec1}}

In many applications, it is natural to predict that the relationship between
two variables satisfies a rather vague condition such as `$Y$ tends to
increase as $X$ increases'. For instance, in many clinical or epidemiological
studies, an important objective is to asses the existence of a monotonic
dose-response relationship between a disease and an ordered exposure, that is
a relationship in which disease risk increases with each increment of
exposure. A common way for a researcher to handle this, is to construct a
generalized linear model with binary data in which $X$ (dose) has a linear
effect on some scale, on a response variable $Y.$ For a binary response $Y$,
we denote by $\pi(x)=\Pr(Y=1|X=x)$ the probability of a success given a dose
$x$, the unknown values for which we desire to make decisions. If we consider
$I$ doses $0<x_{1}<x_{2}<....<x_{I}$ and each of them is given to $n_{i}$
individuals, $i=1,...,I$ respectively, we have $I$ independent binomial random
variables $N_{i1}|X=x_{i}\sim\mathcal{B}(n_{i},\pi(x_{i}))$,$\;i=1,...,I$,
representing the number of successes out of $n_{i}$ trials when the level of
the predictor, the dose, is $x_{i}$, $i=1,...,I$. The information of interest
when we have a realization in a sample can be summarized as%
\[%
\begin{tabular}
[c]{||c|c||c|c||}\hline\hline
$x_{1}$ & $n_{1}$ & $n_{11}$ & $n_{12}=n_{1}-n_{11}$\\
$\vdots$ & $\vdots$ & $\vdots$ & $\vdots$\\
$x_{i}$ & $n_{i}$ & $n_{i1}$ & $n_{i2}=n_{i}-n_{i1}$\\
$\vdots$ & $\vdots$ & $\vdots$ & $\vdots$\\
$x_{I}$ & $n_{I}$ & $n_{I1}$ & $n_{I2}=n_{I}-n_{I1}$\\\hline\hline
\end{tabular}
\ \ \ .
\]
Note that we have an $I\times2$ contingency table, expressed in vector
notation by%
\[
\boldsymbol{N}=(\boldsymbol{N}_{1}^{T},...,\boldsymbol{N}_{i}^{T}%
,...,\boldsymbol{N}_{I}^{T})^{T}%
\]
where $\boldsymbol{N}_{i}=(N_{i1},N_{i2})^{T}$, with $N_{i2}=n_{i}-N_{i1}%
$\ being the number of failures out of $n_{i}$ trials, $i=1,...,I$. As we are
dealing with a product binomial sample or a multi-sample of binomial random
variables, we have $\boldsymbol{N}_{i}^{T}\boldsymbol{1}_{2}=n_{i}$ and
$\boldsymbol{N}^{T}\boldsymbol{1}_{I2}=n$, where $n\equiv%
{\textstyle\sum\nolimits_{i=1}^{I}}
n_{i}$ and $n_{i}$, $i=1,...,I$ are prefixed known values.

The statistical problem consisting in testing the equality of $I$ binomial
proportions against a monotone trend in proportions at the same or opposite
direction of the doses has been extensively studied in different research
settings. One of the most frequently used test-statistic is, by far, the
Cochran-Armitage (CA) test, defined as%
\begin{equation}
T_{n,CA}=\frac{%
{\displaystyle\sum\limits_{i=1}^{I}}
N_{i1}(x_{i}-\bar{x})}{\left(  \widehat{p}_{\bullet1}(1-\widehat{p}_{\bullet
1})%
{\displaystyle\sum\limits_{i=1}^{I}}
N_{i1}(x_{i}-\bar{x})^{2}\right)  ^{\frac{1}{2}}}, \label{ca}%
\end{equation}
where $\bar{x}=\frac{1}{n}%
{\textstyle\sum\nolimits_{i=1}^{I}}
n_{i}x_{i}$ and $\widehat{p}_{\bullet1}=(\sum_{i=1}^{I}N_{i1})/n$. It was
introduced by Cochran \cite{1} and Armitage \cite{2}, and discussed in Mantel
\cite{3} as special case of the extended Mantel-Haenszel test for several
$I\times2$ contingency tables, each one corresponding to a stratum or
categories of a confounding variable. It can be found expressed in several
ways but (\ref{ca}) corresponds with the one given in Tarone and Gart
\cite{4}, at the end of Section 2. It assumes that parameter $\pi(x)$ is
linked to the linear predictor
\begin{equation}
\eta=g(\pi)=\alpha+\beta x, \label{link}%
\end{equation}
where the link function, $g$, is a monotone and twice differentiable function
over the interval $[x_{1},x_{I}]$. The square of the Cohran-armitage test is a
score test-statistic (Rao, \cite{5}), where under $H_{0}:\beta=0$ it requires
to replace the nuisance parameter $\alpha$ by its maximum likelihood estimator
(MLE), and in comparison with other test-statistic focused on the same model
assumption, such as Wald and likelihood ratio (LR) tests, it does not depend
on the functional shape of function $g$. Taking into account such a property,
Cox \cite[page 65]{6} considered that it is a kind of nonparametric
test-statistic. In Cox \cite{6b} and Mantel \cite{3} the logit function,%
\begin{equation}
\eta=g(\pi)=\log\left(  \frac{\pi}{1-\pi}\right)  , \label{logit}%
\end{equation}
was applied as link function and in Tarone and Gart \cite{4} was found it as
an optimal function in terms of the Pitman asymptotic relative efficiency.

In the existing literature on dose-response models we can distinguish model
based techniques (parametric procedure) and isotonic regression or
order-restricted techniques (non-parametric procedure). See Barlow et al.
\cite{6c}, Robertson et al. \cite{6d} or Silvapulle and Sen \cite{6e}\ for
more detailed information about both types of procedures. Leuraud and Benichou
\cite{7} made comparison studies of type I error and power for both kind of
test-statistics (CA test and isotonic regression among others) for small and
moderate sample sizes and their conclusion is very similar to the one given in
Agresti and Coull \cite{8}, for LR tests, logit model based one and the
order-restricted one: the model based test is good in type I error and power
properties but the researcher must be cautious in checking the model
assumptions previously, i.e. an additional goodness-of-fit test is needed for
the linear logit model. The aforementioned methods are based on asymptotic
distributions of the test-statistics. In Hirji and Tang \cite{9}, Tang et al.
\cite{10} and Shan et al. \cite{11} exact methods were proposed and they solve
an important weakness associated with the usually applied asymptotic methods:
for small and moderate sample sizes the nominal size of the test is not
usually preserved. That is, the exact significance level tends to exceed the
nominal level, by a big margin in the case of the Wald and LR test-statistic.
Such a problem was theoretically studied in Kang and Lee \cite{12} for the
two-sided CA test. Based on the logit link function, our interest in this
paper is to find a new family of test statistics with the same asymptotic
distribution as the LR test (see Agresti and Coull \cite{8}) which correct the
weakness in the preservation of the nominal size and maintain similar
properties in power. The CA test-statistic is useful as guideline for
comparison, since it has the best behavior between the asymptotic test-statistics.

This article is organized as follows. In Section \ref{Sec2} the proposed
test-statistics are presented and their asymptotic distribution is found for
one-sided and two-sided alternatives. Section \ref{Ex} is devoted to
illustrate the method with a real data example and\ in Section \ref{MC} the
performance in error I and power of the proposed test-statistics is studied
and compared with the CA test.

\section{Proposed test-statistics\label{Sec2}}

Under the model assumption (\ref{logit}), the conditional probability vector
of $\boldsymbol{N}$ is given by%
\[
\boldsymbol{\pi}(\alpha,\beta)\boldsymbol{=}\left(  \pi_{11}(\alpha,\beta
),\pi_{12}(\alpha,\beta),...,\pi_{I1}(\alpha,\beta),\pi_{I2}(\alpha
,\beta)\right)  ^{T},
\]
where%
\[
\pi_{ij}(\alpha,\beta)=\left\{
\begin{array}
[c]{ll}%
1-\pi(x_{i}), & j=1\\
\pi(x_{i}), & j=2
\end{array}
\right.  ,
\]
and the joint probability vector of $\boldsymbol{N}$%
\begin{equation}
\boldsymbol{p}(\alpha,\beta)\boldsymbol{=}\left(  p_{11}(\alpha,\beta
),p_{12}(\alpha,\beta),...,p_{I1}(\alpha,\beta),p_{I2}(\alpha,\beta)\right)
^{T}, \label{jointPr}%
\end{equation}
where%
\[
p_{i1}(\alpha,\beta)=(X=x_{i},Y=j)=\Pr(Y=j|X=x_{i})\Pr(X=x_{i})=\pi
_{ij}(\alpha,\beta)\frac{n_{i}}{n}=\left\{
\begin{array}
[c]{ll}%
\frac{n_{i}}{n}(1-\pi(x_{i})), & j=1\\
\frac{n_{i}}{n}\pi(x_{i}), & j=2
\end{array}
\right.  .
\]

We shall test the null hypothesis of no relationship between the binary
response $Y$ and an ordered categorical explanatory variable $X$ (doses)
against the one-sided alternative hypothesis $H_{1}$ of an increasing
dose-response relationship between a response variable $Y$ and (doses) $X$%
\begin{subequations}
\begin{align}
&  H_{0}:\;\pi(x_{1})\leq\pi(x_{2})\leq\cdots\leq\pi(x_{I-1})\leq\pi
(x_{I})\text{,}\label{test1a}\\
&  H_{1}:\;\pi(x_{1})>\pi(x_{2})>\cdots>\pi(x_{I-1})>\pi(x_{I})\text{.}
\label{test1b}%
\end{align}
Taking into account%
\end{subequations}
\[
\beta>0\Longleftrightarrow\beta\left(  x_{i+1}-x_{i}\right)
>0\Longleftrightarrow\underset{=\pi(x_{i})}{\underbrace{\frac{\exp\left(
\alpha+\beta x_{i+1}\right)  }{1+\exp\left(  \alpha+\beta x_{i+1}\right)  }}%
}>\underset{=\pi(x_{i+1})}{\underbrace{\frac{\exp\left(  \alpha+\beta
x_{i}\right)  }{1+\exp\left(  \alpha+\beta x_{i}\right)  }}},
\]
we can see that (\ref{test1a})-(\ref{test1b}) is equivalent to%
\begin{equation}
H_{0}:\beta\leq0\text{ versus }H_{1}:\beta>0. \label{test1}%
\end{equation}
It is important to mention that sometime (\ref{test1a}) and $H_{0}$\ in
(\ref{test1}) are expressed with equalities (see for instance Shan et al.
\cite[Section 2]{11}), but the procedure used for the test-statistic is
equivalent since the shape and the asymptotic distribution of the
test-statistic is the same. We prefer using this shape since in order to
justify later the goodness of fit test-statistic is more coherent.

We shall also consider the two-sided alternative hypothesis $H_{1}^{\prime}$
of a decreasing or increasing dose-response relationship between a response
variable $Y$ and (doses) $X$,%
\begin{subequations}
\begin{align}
&  H_{0}^{\prime}:\;\pi(x_{1})=\pi(x_{2})=\cdots=\pi(x_{I-1})=\pi
(x_{I})\text{,}\label{test2a}\\
&  H_{1}^{\prime}:\;\left(  \pi(x_{1})\leq\pi(x_{2})\leq\cdots\leq\pi
(x_{I-1})\leq\pi(x_{I})\text{ and }\pi(x_{1})<\pi(x_{I})\right)
\label{test2b}\\
&  \qquad\text{or }\left(  \pi(x_{1})\geq\pi(x_{2})\geq\cdots\geq\pi
(x_{I-1})\geq\pi(x_{I})\text{ and }\pi(x_{1})>\pi(x_{I})\right)
\text{,}\nonumber
\end{align}
which is equivalent to%
\end{subequations}
\begin{equation}
H_{0}^{\prime}:\beta=0\text{ versus }H_{1}^{\prime}:\beta\neq0. \label{test2}%
\end{equation}
The asymptotic distribution of the CA test-statistics, (\ref{ca}), under
$H_{0}$ in (\ref{test1}) and under $H_{0}^{\prime}$\ in (\ref{test2}), is
standard normal. In practice, we shall prefer use the chi-square distribution
with 1 degree of freedom ($\chi_{1}^{2}$) for $T_{n,CA}^{2}$ when we follow
the two-sided test.

Let $(\widehat{\alpha},\widehat{\beta})$ be the MLE of parameters in the
linear logit model (\ref{logit}) and $(\widetilde{\alpha},\widetilde{\beta})$
the MLE in the linear logit model (\ref{logit}) with restriction $\beta\leq0$.
If $\widehat{\beta}>0$, the LR test-statistic for the one-sided test
(\ref{test1}) is given by%
\[
G_{n}^{2}=2\left(  \sum_{i=1}^{I}N_{i1}\log\left(  \frac{\pi_{i1}%
(\widehat{\alpha},\widehat{\beta})}{\pi_{i1}(\widetilde{\alpha}%
,\widetilde{\beta})}\right)  +\sum_{i=1}^{I}N_{i2}\log\left(  \frac{\pi
_{i2}(\widehat{\alpha},\widehat{\beta})}{\pi_{i2}(\widetilde{\alpha
},\widetilde{\beta})}\right)  \right)  .
\]
If $\widehat{\beta}\leq0$, then $\widehat{\beta}=\widetilde{\beta}$ and the LR
test-statistic for the one-sided test (\ref{test1}) is given by $G_{n}^{2}=0$,
which means that the null hypothesis (lack of positive monotonicity) is always
accepted. In such a case, we should perform the opposite test $H_{0}:\beta
\geq0$ versus $H_{1}:\beta<0$, in order to demonstrate negative monotonicity.
If $\widehat{\beta}>0$, then $\widetilde{\beta}=0$ and the LR test-statistic
for the one-sided test (\ref{test1}) is given by%
\begin{equation}
G_{n}^{2}=2\left(  \sum_{i=1}^{I}N_{i1}\log\left(  \frac{\pi_{i1}%
(\widehat{\alpha},\widehat{\beta})}{\pi_{i1}(\widetilde{\alpha},0)}\right)
+\sum_{i=1}^{I}N_{i2}\log\left(  \frac{\pi_{i2}(\widehat{\alpha}%
,\widehat{\beta})}{\pi_{i2}(\widetilde{\alpha},0)}\right)  \right)  ,
\label{LRT}%
\end{equation}
where $\pi_{i1}(\widetilde{\alpha},0)=\widehat{p}_{\bullet1}=(\sum_{i=1}%
^{I}N_{i1})/n$, $\pi_{i2}(\widetilde{\alpha},0)=1-\pi_{i1}(\widetilde{\alpha
},0)$, $\pi_{i1}(\widehat{\alpha},\widehat{\beta})=\frac{\exp\{\widehat{\alpha
}+\widehat{\beta}x_{i}\}}{1+\exp\{\widehat{\alpha}+\widehat{\beta}x_{i}\}}$
and $\pi_{i2}(\widehat{\alpha},\widehat{\beta})=1-\pi_{i1}(\widehat{\alpha
},\widehat{\beta})$. The asymptotic distribution of the LR test-statistic for
(\ref{test1}), as $n$ goes to infinite, is chi-bar square with two summands
(see Agresti and Coull \cite{8} for more details). The LR test-statistic for
the two-sided test (\ref{test2}) is also given by (\ref{LRT}) and its
asymptotic distribution is $\chi_{1}^{2}$.

Now we are going to construct a new family of test-statistics inspired in that
(\ref{LRT}) can be expressed in terms of the Kullback divergence measure
between the empirical and model joint probability vectors, as follows%
\begin{equation}
G_{n}^{2}=2\left(  \mathrm{d}_{Kull}(\widehat{\boldsymbol{p}},\boldsymbol{p}%
(\widetilde{\alpha},0))-\mathrm{d}_{Kull}(\widehat{\boldsymbol{p}%
},\boldsymbol{p}(\widehat{\alpha},\widehat{\beta}))\right)  , \label{LRT2}%
\end{equation}
where $\widehat{\boldsymbol{p}}$ is the empirical joint probability vector of
$\boldsymbol{N}$, $\widehat{\boldsymbol{p}}=\frac{\boldsymbol{N}}{n}$, i.e.
$\widehat{\boldsymbol{p}}\boldsymbol{=}\left(  \widehat{p}_{11},\widehat{p}%
_{12},...,\widehat{p}_{I1},\widehat{p}_{I2}\right)  ^{T}$, with $p_{ij}%
=\frac{N_{ij}}{n}$, $\boldsymbol{p}(\widehat{\alpha},\widehat{\beta
})\boldsymbol{=}\left(  p_{11}(\widehat{\alpha},\widehat{\beta}),p_{12}%
(\widehat{\alpha},\widehat{\beta}),...,p_{I1}(\widehat{\alpha},\widehat{\beta
}),p_{I2}(\widehat{\alpha},\widehat{\beta})\right)  ^{T}$ with $p_{i1}%
(\widehat{\alpha},\widehat{\beta})=\frac{n_{i}}{n}\pi_{ij}(\widehat{\alpha
},\widehat{\beta})$\ is the MLE of the joint probability vector,
$\boldsymbol{p}(\widetilde{\alpha},0)\boldsymbol{=}\left(  p_{11}%
(\widetilde{\alpha},0),p_{12}(\widetilde{\alpha},0),...,p_{I1}%
(\widetilde{\alpha},0),p_{I2}(\widetilde{\alpha},0)\right)  ^{T}$ with
$p_{i1}(\widetilde{\alpha},0)=\frac{n_{i}}{n}\pi_{ij}(\widetilde{\alpha}%
,0)$\ is the MLE of the joint probability vector when the conditional
probabilities are homogeneous\ and%
\[
\mathrm{d}_{Kull}(\boldsymbol{p},\boldsymbol{q})=\sum_{h=1}^{k}p_{h}%
\log\left(  \frac{p_{h}}{q_{h}}\right)  ,
\]
with $\boldsymbol{p}=(p_{1},...,p_{k})^{T}$, $\boldsymbol{q}=(q_{1}%
,...,q_{k})^{T}$ being two arbitrary $k$-dimensional probability vectors. It
is very interesting to observe that $2\mathrm{d}_{Kull}%
(\widehat{\boldsymbol{p}},\boldsymbol{p}(\widetilde{\alpha},0))$, in
(\ref{LRT2}), is the LR test for the homogeneous conditional probabilities
($\beta=0$) and $2\mathrm{d}_{Kull}(\widehat{\boldsymbol{p}},\boldsymbol{p}%
(\widehat{\alpha},\widehat{\beta}))$ is the LR test for the goodness of fit of
the logit model, the test we should perform before the test of monotonicity of probabilities.

The family of test-statistics based on $\phi$-divergence measures, $T_{n,\phi
}(\widehat{\boldsymbol{p}},\boldsymbol{p}(\widetilde{\alpha},0),\boldsymbol{p}%
(\widehat{\alpha},\widehat{\beta})))$, which generalizes the LR test, is
obtained replacing $2$ by $\frac{2}{\phi^{\prime\prime}(1)}$\ and
$\mathrm{d}_{Kull}(\boldsymbol{p},\boldsymbol{q})$ of (\ref{LRT2}) by%
\begin{equation}
\mathrm{d}_{\phi}(\boldsymbol{p},\boldsymbol{q})=\sum_{h=1}^{k}q_{h}%
\phi\left(  \frac{p_{h}}{q_{h}}\right)  , \label{div}%
\end{equation}
where $\phi:%
\mathbb{R}
_{+}\longrightarrow%
\mathbb{R}
$ is a convex function such that $\phi(1)=\phi^{\prime}(1)=0$, $\phi
^{\prime\prime}(1)>0$, $0\phi(\frac{0}{0})=0$, $0\phi(\frac{p}{0}%
)=p\lim_{u\rightarrow\infty}\frac{\phi(u)}{u}$, for $p\neq0$, actually
$\mathrm{d}_{\phi}(\boldsymbol{p},\boldsymbol{q})=\mathrm{d}_{Kull}%
(\boldsymbol{p},\boldsymbol{q})$, where $\phi(x)=x\log x-x+1$. For more
details about $\phi$-divergence measures see Pardo \cite{13}.\ If we take
$\phi_{\lambda}(x)=\frac{1}{\lambda(1+\lambda)}(x^{\lambda+1}-x-\lambda
(x-1))$, where for each $\lambda\in%
\mathbb{R}
-\{-1,0\}$ a different divergence measure is constructed, a very important
subfamily called \textquotedblleft power divergence family of
measures\textquotedblright\ (Cressie and Read \cite{14})\ is obtained%
\begin{equation}
T_{n,\lambda}=2\left(  \mathrm{d}_{\lambda}(\widehat{\boldsymbol{p}%
},\boldsymbol{p}(\widetilde{\alpha},0))-\mathrm{d}_{\lambda}%
(\widehat{\boldsymbol{p}},\boldsymbol{p}(\widehat{\alpha},\widehat{\beta
}))\right)  , \label{CR}%
\end{equation}
where%
\begin{align*}
d_{\lambda}(\boldsymbol{p},\boldsymbol{q})  &  =\frac{1}{\lambda(\lambda
+1)}\left(
{\displaystyle\sum\limits_{h=1}^{k}}
\frac{p_{h}^{\lambda+1}}{q_{h}^{\lambda}}-1\right)  \text{, for each }%
\lambda\in%
\mathbb{R}
-\{-1,0\}\text{,}\\
d_{0}(\boldsymbol{p},\boldsymbol{q})  &  =\lim_{\ell\rightarrow0}d_{\ell
}(\boldsymbol{p},\boldsymbol{q})=\mathrm{d}_{Kull}(\boldsymbol{p}%
,\boldsymbol{q})\text{, for }\lambda=0\text{,}\\
d_{-1}(\boldsymbol{p},\boldsymbol{q})  &  =\lim_{\ell\rightarrow-1}d_{\ell
}(\boldsymbol{p},\boldsymbol{q})=\mathrm{d}_{Kull}(\boldsymbol{q,p})\text{,
for }\lambda=-1\text{.}%
\end{align*}
This family of power divergence based test-statistics includes also the LR
test when $\lambda=0$.

Now we shall establish the distribution of all the test-statistics based on
$\phi$-divergence measures, and thus this distribution is also valid for the
subfamily (\ref{CR}).

\begin{theorem}
\label{Th1}The asymptotic distribution, as $n$ tends to infinite, of the
test-statistics based on $\phi$-divergence measures%
\begin{equation}
T_{n,\phi}=Q_{n,\phi}^{1}(\widehat{\boldsymbol{p}},\boldsymbol{p}%
(\widetilde{\alpha},0))-Q_{n,\phi}^{2}(\widehat{\boldsymbol{p}},\boldsymbol{p}%
(\widehat{\alpha},\widehat{\beta}))), \label{PDT}%
\end{equation}
where%
\begin{align}
Q_{n,\phi}^{1}  &  =\frac{2}{\phi^{\prime\prime}(1)}\mathrm{d}_{\phi
}(\widehat{\boldsymbol{p}},\boldsymbol{p}(\widetilde{\alpha},0)),\nonumber\\
Q_{n,\phi}^{2}  &  =\frac{2}{\phi^{\prime\prime}(1)}\mathrm{d}_{\phi
}(\widehat{\boldsymbol{p}},\boldsymbol{p}(\widehat{\alpha},\widehat{\beta})),
\label{Prelim}%
\end{align}
$\mathrm{d}_{\phi}$ is given by (\ref{div}), is chi-square with one degree of
freedom ($\chi_{1}^{2}$) for the two-sided test (\ref{test2}), and chi-bar
square with two summands ($\frac{1}{2}\chi_{0}^{2}+\frac{1}{2}\chi_{1}^{2}$)
for the one-sided test (\ref{test1}).
\end{theorem}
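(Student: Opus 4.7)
The plan is to reduce $T_{n,\phi}$ to the same quadratic form that drives the classical likelihood-ratio statistic $G_n^2$ of (\ref{LRT2}), and then read off the two asymptotic distributions from standard normal theory. The starting point is the second-order Taylor expansion of the $\phi$-divergence around the diagonal: since $\phi(1)=\phi'(1)=0$ and $\phi''(1)>0$, whenever $\|\widehat{\boldsymbol{p}}-\boldsymbol{q}\|=O_P(n^{-1/2})$ one has
\[
\mathrm{d}_\phi(\widehat{\boldsymbol{p}},\boldsymbol{q})
 = \tfrac{\phi''(1)}{2}\sum_h \frac{(\widehat{p}_h-q_h)^2}{q_h} + o_P(n^{-1}).
\]
Applying this separately with $\boldsymbol{q}=\boldsymbol{p}(\widetilde{\alpha},0)$ and $\boldsymbol{q}=\boldsymbol{p}(\widehat{\alpha},\widehat{\beta})$, both of which are $\sqrt{n}$-consistent for the common null $\boldsymbol{p}_0=\boldsymbol{p}(\alpha_0,0)$ under the null hypotheses of (\ref{test1}) and (\ref{test2}), turns $Q_{n,\phi}^1$ and $Q_{n,\phi}^2$ into Pearson-type quadratic forms, up to an $o_P(1)$ remainder. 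The constant $2/\phi''(1)$ is exactly what is needed to cancel the prefactor coming from the expansion, which is why the limiting distribution does not depend on $\phi$.

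Next I would substitute the standard asymptotic expansions of the unrestricted MLE $(\widehat{\alpha},\widehat{\beta})$ and the restricted MLE $(\widetilde{\alpha},0)$ into the two quadratic forms. Writing $\widehat{p}_h - p(\widehat{\alpha},\widehat{\beta})_h$ and $\widehat{p}_h - p(\widetilde{\alpha},0)_h$ each as $\widehat{p}_h - p_0(h)$ minus a linear term in the score, and using the Fisher decomposition for the logit model (which acts as an exponential family, so the Pearson form is approximately orthogonal to the tangent space at $\boldsymbol{p}_0$), the cross terms cancel and the difference collapses to
\[
T_{n,\phi} \;=\; \frac{n\widehat{\beta}^{\,2}}{[\mathcal{I}(\alpha_0,0)^{-1}]_{\beta\beta}} + o_P(1),
\]
which is precisely the same leading expression one obtains for $G_n^2$ in (\ref{LRT2}) by the Wilks argument. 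This identity is the workhorse of the theorem, because it immediately reduces the problem to the asymptotics of a single standardized coordinate of the MLE.

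Now both distributional conclusions follow. For the two-sided test (\ref{test2}), standard MLE theory gives $\sqrt{n}\,\widehat{\beta}\xrightarrow{d}\mathcal{N}\!\big(0,[\mathcal{I}^{-1}]_{\beta\beta}\big)$ under $H_0'$, hence $T_{n,\phi}\xrightarrow{d}\chi_1^2$. For the one-sided test (\ref{test1}) on the boundary $\beta=0$, the restricted MLE satisfies $\widetilde{\beta}=\widehat{\beta}$ when $\widehat{\beta}\leq 0$ (so $T_{n,\phi}=0$ by the convention introduced just above (\ref{LRT})) and $\widetilde{\beta}=0$ when $\widehat{\beta}>0$; since the asymptotic normality of $\widehat{\beta}$ gives $\Pr(\widehat{\beta}>0)\to 1/2$, conditioning on the sign yields the mixture $\tfrac12\chi_0^2+\tfrac12\chi_1^2$, which is exactly the chi-bar-square distribution with two summands claimed in the statement and already established for $G_n^2$ in Agresti and Coull \cite{8}.

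The main technical obstacle is the rigorous bookkeeping in the second step: the remainder in the Taylor expansion must be controlled uniformly over $\boldsymbol{q}$ ranging over a $\sqrt{n}$-neighbourhood of $\boldsymbol{p}_0$ (so that the data-dependence of $(\widetilde{\alpha},0)$ and $(\widehat{\alpha},\widehat{\beta})$ is harmless), and one must verify that the cross-term produced by $Q_{n,\phi}^1$ involving $\boldsymbol{p}(\widetilde{\alpha},0)-\boldsymbol{p}_0$ is exactly cancelled by the analogous term in $Q_{n,\phi}^2$. This is essentially a projected-score argument; the cleanest route is to parameterize both fits inside the same logit family and exploit the orthogonality of the two score components at $\beta=0$ after whitening by the Fisher information, so that only the $\beta$-component survives in the limit.
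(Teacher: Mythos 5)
Your argument is correct in outline, but it follows a genuinely different route from the paper. You give a first-principles asymptotic derivation: a second-order Taylor expansion of $\mathrm{d}_\phi$ around the diagonal (using $\phi(1)=\phi'(1)=0$, $\phi''(1)>0$) to replace both $Q_{n,\phi}^{1}$ and $Q_{n,\phi}^{2}$ by Pearson-type quadratic forms, a projected-score/Wilks argument to collapse their difference to the standardized $\widehat{\beta}$-coordinate, and then the classical boundary argument ($\Pr(\widehat{\beta}>0)\to\tfrac12$ at $\beta=0$, with the convention $T_{n,\phi}=0$ when $\widehat{\beta}\le 0$) to obtain $\tfrac12\chi_0^2+\tfrac12\chi_1^2$ for the one-sided test and $\chi_1^2$ for the two-sided one. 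The paper instead reparametrizes the linear logit model as a saturated loglinear model subject to the linear constraints $x_{1}\theta_{12(i1)}-x_{i}\theta_{12(11)}=0$, identifies the three nested parameter spaces (saturated, logit, independence), and invokes the general nesting-framework results of Mart\'{\i}n and Balakrishnan (Theorems 8 and 17 there), so that the degrees of freedom and the chi-bar weights come out of counting active constraints rather than from explicit expansions. Your route is more elementary and self-contained, and it makes transparent why the limit is free of $\phi$ (the $2/\phi''(1)$ normalization cancels the curvature); the paper's route is shorter, covers the whole $\phi$-divergence family and the boundary case in one stroke by citation, and the same machinery immediately yields Theorem 2 (the $\chi_{I-2}^2$ goodness-of-fit limit). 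The one caveat in your sketch is that the key cancellation step (uniform control of the Taylor remainder over data-dependent centerings and the orthogonality that kills the cross terms) is stated as a plan rather than proved; it is the standard argument for $\phi$-divergence statistics in nested exponential-family models, but it is precisely the content the paper outsources to the cited theorems, so it would need to be written out (or cited, e.g.\ from Pardo's monograph) for a complete proof. Also note that for the one-sided test the null is $\beta\le 0$, so you should make explicit that the stated limit is obtained at the least favorable boundary configuration $\beta=0$, interior points $\beta<0$ giving a degenerate (smaller) limit.
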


\begin{proof}
See Appendix \ref{ApA}.
\end{proof}

As noted previously, before performing the test of monotonicity of
probabilities we need to check the goodness of fit of the logit model that we
are considering as assumption. Its test-statistic is the second summand in
(\ref{PDT}) and thus this an advantage in the calculation since we can
calculate both of them at the same time. In the following theorem we give its
asymptotic distribution as preliminary test of (\ref{test1}) or (\ref{test2}).

\begin{theorem}
\label{Th2}The asymptotic distribution, as $n$ tends to infinite, of the
test-statistics based on $\phi$-divergence measures $Q_{n,\phi}^{2}%
(\widehat{\boldsymbol{p}},\boldsymbol{p}(\widetilde{\alpha},\widetilde{\beta
}))$, given in (\ref{Prelim}), is $\chi_{I-2}^{2}$ under the null hypothesis
that the linear logit model is true.
\end{theorem}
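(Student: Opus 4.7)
The plan is to reduce $Q_{n,\phi}^{2}(\widehat{\boldsymbol{p}},\boldsymbol{p}(\widehat{\alpha},\widehat{\beta}))$ to a quadratic form in the asymptotically normal vector $\sqrt{n}(\widehat{\boldsymbol{p}}-\boldsymbol{p}(\widehat{\alpha},\widehat{\beta}))$ and then identify the rank and idempotency of the matrix of that quadratic form. First, I would set $\boldsymbol{\theta}=(\alpha,\beta)^{T}$, write $\boldsymbol{p}(\boldsymbol{\theta})$ for the model probability vector, and use the second order Taylor expansion of $\phi$ at $1$ together with $\phi(1)=\phi^{\prime}(1)=0$ to obtain
\begin{equation*}
\frac{2}{\phi^{\prime\prime}(1)}\mathrm{d}_{\phi}(\widehat{\boldsymbol{p}},\boldsymbol{p}(\widehat{\boldsymbol{\theta}}))=\sum_{i=1}^{I}\sum_{j=1}^{2}\frac{(\widehat{p}_{ij}-p_{ij}(\widehat{\boldsymbol{\theta}}))^{2}}{p_{ij}(\widehat{\boldsymbol{\theta}})}+o_{P}(1),
\end{equation*}
i.e.\ under the null the $\phi$-divergence goodness-of-fit statistic is asymptotically equivalent to the Pearson statistic. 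This step is standard and uses that $\widehat{\boldsymbol{p}}-\boldsymbol{p}(\widehat{\boldsymbol{\theta}})=O_{P}(n^{-1/2})$ together with the consistency of the MLE.

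Next, I would handle the asymptotic distribution of $\sqrt{n}(\widehat{\boldsymbol{p}}-\boldsymbol{p}(\widehat{\boldsymbol{\theta}}))$. Since we are in a product binomial (not multinomial) scheme, $\sqrt{n}(\widehat{\boldsymbol{p}}-\boldsymbol{p}(\boldsymbol{\theta}_{0}))\stackrel{d}{\longrightarrow}\mathcal{N}(\boldsymbol{0},\boldsymbol{\Sigma}(\boldsymbol{\theta}_{0}))$, where $\boldsymbol{\Sigma}(\boldsymbol{\theta}_{0})$ is block diagonal with $i$-th block $\frac{n_{i}}{n}\mathrm{diag}(\boldsymbol{\pi}_{i})-\frac{n_{i}}{n}\boldsymbol{\pi}_{i}\boldsymbol{\pi}_{i}^{T}$ (assuming $n_{i}/n\to w_{i}>0$). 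Writing $\boldsymbol{A}(\boldsymbol{\theta})=\partial\boldsymbol{p}(\boldsymbol{\theta})/\partial\boldsymbol{\theta}^{T}$, the MLE equation combined with a first order Taylor expansion of $\boldsymbol{p}(\widehat{\boldsymbol{\theta}})$ around $\boldsymbol{\theta}_{0}$ yields
\begin{equation*}
\sqrt{n}(\widehat{\boldsymbol{p}}-\boldsymbol{p}(\widehat{\boldsymbol{\theta}}))=(\boldsymbol{I}_{2I}-\boldsymbol{A}(\boldsymbol{\theta}_{0})(\boldsymbol{A}^{T}\boldsymbol{D}^{-1}\boldsymbol{A})^{-1}\boldsymbol{A}^{T}\boldsymbol{D}^{-1})\sqrt{n}(\widehat{\boldsymbol{p}}-\boldsymbol{p}(\boldsymbol{\theta}_{0}))+o_{P}(1),
\end{equation*}
where $\boldsymbol{D}=\mathrm{diag}(\boldsymbol{p}(\boldsymbol{\theta}_{0}))$. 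Therefore, setting $\boldsymbol{Z}_{n}=\boldsymbol{D}^{-1/2}\sqrt{n}(\widehat{\boldsymbol{p}}-\boldsymbol{p}(\widehat{\boldsymbol{\theta}}))$, the Pearson form becomes $\boldsymbol{Z}_{n}^{T}\boldsymbol{Z}_{n}+o_{P}(1)$, and $\boldsymbol{Z}_{n}$ converges to a centered normal with covariance $\boldsymbol{M}=\boldsymbol{P}\,\boldsymbol{D}^{-1/2}\boldsymbol{\Sigma}\boldsymbol{D}^{-1/2}\boldsymbol{P}^{T}$, where $\boldsymbol{P}$ is the projector in the display above, transported to the $\boldsymbol{D}^{-1/2}$-scale.

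The main obstacle is to verify that $\boldsymbol{M}$ is an idempotent matrix of rank $I-2$, so that $\boldsymbol{Z}_{n}^{T}\boldsymbol{Z}_{n}$ converges to $\chi_{I-2}^{2}$. For this I would exploit the product binomial structure: on the $2I$-dimensional cell space, $\boldsymbol{D}^{-1/2}\boldsymbol{\Sigma}\boldsymbol{D}^{-1/2}$ is the orthogonal projector onto the $I$-dimensional subspace orthogonal to the row constraints $\{\sum_{j}p_{ij}=n_{i}/n\}_{i=1}^{I}$, a fact that follows from direct computation on each $2\times 2$ binomial block. Composing with the projector $\boldsymbol{P}$, which removes a further $2$-dimensional tangent space of the logit model (spanned by the columns of $\boldsymbol{D}^{-1/2}\boldsymbol{A}(\boldsymbol{\theta}_{0})$ and contained in the previous $I$-dimensional subspace because the model respects the row totals), the resulting matrix is itself an orthogonal projector onto a subspace of dimension $I-2$. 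Standard quadratic form theory for normal vectors then gives $\boldsymbol{Z}_{n}^{T}\boldsymbol{Z}_{n}\stackrel{d}{\longrightarrow}\chi_{I-2}^{2}$, which is the claim. The bookkeeping of the two nested projections is the part that requires the greatest care; once the idempotency and rank are checked, the asymptotic $\chi_{I-2}^{2}$ distribution is immediate.
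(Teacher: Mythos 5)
Your proof is essentially correct, but it follows a genuinely different route from the paper. The paper obtains Theorem \ref{Th2} as a corollary of the general nesting framework of Mart\'{\i}n and Balakrishnan \cite{16}: the linear logit model is rewritten as a saturated loglinear model subject to the $I-2$ linear constraints (\ref{res}), the goodness-of-fit problem is cast as testing $\Theta(E'')$ against the saturated $\Theta(F'')=\mathbb{R}^{I}$, and their Theorem 8 delivers $\chi^{2}_{\mathrm{card}(E'')-\mathrm{card}(F'')}=\chi^{2}_{I-2}$ directly, with no explicit asymptotics. You instead give the classical self-contained argument (in the spirit of Pardo \cite{13}): a second-order expansion of $\phi$ at $1$ reduces $Q^{2}_{n,\phi}$ to the Pearson statistic, the MLE linearization $\sqrt{n}(\widehat{\boldsymbol{p}}-\boldsymbol{p}(\widehat{\boldsymbol{\theta}}))=(\boldsymbol{I}_{2I}-\boldsymbol{A}(\boldsymbol{A}^{T}\boldsymbol{D}^{-1}\boldsymbol{A})^{-1}\boldsymbol{A}^{T}\boldsymbol{D}^{-1})\sqrt{n}(\widehat{\boldsymbol{p}}-\boldsymbol{p}(\boldsymbol{\theta}_{0}))+o_{P}(1)$ is valid here precisely because the model preserves the row totals (so $\boldsymbol{1}^{T}\boldsymbol{A}_{i}=\boldsymbol{0}$ and $\boldsymbol{A}^{T}\boldsymbol{D}^{-1}\boldsymbol{A}$ is the Fisher information under product-binomial sampling, a point you correctly flag), and the resulting covariance is the difference of two nested orthogonal projectors, $\bigl(\boldsymbol{I}-\sum_{i}\sqrt{\boldsymbol{\pi}_{i}}\sqrt{\boldsymbol{\pi}_{i}}^{T}\bigr)$-type of rank $I$ minus the rank-$2$ projector onto $\mathrm{col}(\boldsymbol{D}^{-1/2}\boldsymbol{A})$, giving an idempotent limit covariance of rank $I-2$ and hence $\chi^{2}_{I-2}$. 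Your approach buys transparency and independence from the external Theorem 8 (and makes explicit where $I-2 = I - 2$ parameters comes from), at the cost of the projector bookkeeping; the paper's approach is shorter and uniform with Theorem \ref{Th1}, where the chi-bar-square case genuinely needs the order-restricted machinery. Two small points to tidy: the displayed equivalence should carry the factor $n$ (i.e.\ $\frac{2n}{\phi''(1)}\mathrm{d}_{\phi}(\widehat{\boldsymbol{p}},\boldsymbol{p}(\widehat{\boldsymbol{\theta}}))= n\sum_{i,j}(\widehat{p}_{ij}-p_{ij}(\widehat{\boldsymbol{\theta}}))^{2}/p_{ij}(\widehat{\boldsymbol{\theta}})+o_{P}(1)$, matching the $N_{ij}$-based formulas of Section \ref{Ex}; as written both sides are $O_{P}(n^{-1})$ and the statement is vacuous), and you should state the standing assumptions $n_{i}/n\rightarrow w_{i}>0$ and at least two distinct doses so that $\boldsymbol{A}(\boldsymbol{\theta}_{0})$ has full rank $2$.
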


\begin{proof}
See Appendix \ref{ApB}.
\end{proof}

\section{Real Data Example\label{Ex}}

Recently, in Paris et al. \cite{16b} dose-response and time-response models
were applied in order to study how some variables influence in two respiratory
diseases, pleural plaques and asbestosis. In total, $n=5545$ formerly
asbestos-exposed workers were considered in a study organized in France from
2003 to 2005. In the original article, four two sided Cochran-Armitage trend
tests were performed by considering four exposures respectively, time since
first exposure (in years), exposure duration (in years), level of exposure
(low, moderate, high and overall, coded by $0$, $1$, $2$ and $3$
respectively), and cumulative exposure index, in relation with the
aforementioned two diseases. The last variable is obtained multiplying the
values of the previous two variables and it can be considered a combination of
them. In this paper, we shall restrict ourselves to two variables, exposure
duration (ED) and cumulative exposure index (CEI). Four exposures were
considered ($I=4$), by splitting the whole interval in four intervals with
around $25\%$ of observed frequencies. In table \ref{t1} the midpoint of each
interval is considered as representative of the interval. In Table \ref{t2},
apart from the one sided CA test-statistic $T_{n,CA}$\ and the two sided one
$T_{n,CA}^{2}$, we studied the family of test-statistics based on power
divergence measures $T_{n,\lambda}=Q_{n,\lambda}^{1}-Q_{n,\lambda}^{2}$, where%
\begin{align*}
Q_{n,\lambda}^{1}  &  =\frac{2}{\lambda(\lambda+1)}\left(  \sum_{i=1}^{4}%
\sum_{j=1}^{2}N_{ij}\left(  \frac{N_{ij}}{n_{i}\pi_{ij}(\widetilde{\alpha}%
,0)}\right)  ^{\lambda}-n\right)  ,\\
Q_{n,\lambda}^{2}  &  =\frac{2}{\lambda(\lambda+1)}\left(  \sum_{i=1}^{4}%
\sum_{j=1}^{2}N_{ij}\left(  \frac{N_{ij}}{n_{i}\pi_{ij}(\widehat{\alpha
},\widehat{\beta})}\right)  ^{\lambda}-n\right)
\end{align*}
with $\lambda\in\{0.5,\frac{2}{3},1,1.5,2\}$and also $G_{n}^{2}=T_{n,0}$\ and
$T_{n,-1}$ where%
\begin{align*}
Q_{n,0}^{1}  &  =2\sum_{i=1}^{4}\sum_{j=1}^{2}N_{i1}\log\left(  \frac{N_{ij}%
}{n_{i}\pi_{ij}(\widetilde{\alpha},0)}\right)  ,\quad Q_{n,0}^{2}=2\sum
_{i=1}^{4}\sum_{j=1}^{2}N_{i1}\log\left(  \frac{N_{ij}}{n_{i}\pi
_{ij}(\widehat{\alpha},\widehat{\beta})}\right)  ,\\
Q_{n,0}^{1}  &  =2\sum_{i=1}^{4}\sum_{j=1}^{2}n_{i}\pi(\widetilde{\alpha
},0)\log\left(  \frac{n_{i}\pi_{ij}(\widetilde{\alpha},0)}{N_{ij}}\right)
,\quad Q_{n,0}^{2}=2\sum_{i=1}^{4}\sum_{j=1}^{2}n_{i}\pi_{ij}%
(\widetilde{\alpha},0)\log\left(  \frac{n_{i}\pi_{ij}(\widehat{\alpha
},\widehat{\beta})}{N_{ij}}\right)  .
\end{align*}
The MLEs of the homogeneous probabilities are $\pi_{i1}(\widetilde{\alpha
},0)=(\sum_{i=1}^{4}N_{i1})/n$, $i=1,...,4$ ($\pi_{i2}(\widetilde{\alpha
},0)=1-\pi_{i1}(\widetilde{\alpha},0)$, $i=1,...,4$), while the monotonic
probabilities are adjusted with a usual binary logistic model. Thus, the
computation for $T_{n,\lambda}$\ is not more complex than for $T_{n,CA}$. The
goodness of fit test-statistics for the linear logit model, $Q_{n,\lambda}%
^{2}$, $\lambda\in\{-1,0.5,0,\frac{2}{3},1,1.5,2\}$, were also calculated. For
all of them the corresponding $p$-value is calculated taking into account that
the asymptotic distribution under the null hypothesis is $T_{n,CA}%
\sim\mathcal{N}(0,1)$ (one sided), $T_{n,\lambda}\sim\frac{1}{2}\chi_{0}%
^{2}+\frac{1}{2}\chi_{1}^{2}$, $T_{n,CA}^{2},T_{n,\lambda}\sim\chi_{1}^{2}$
(two sided), $Q_{n,\lambda}^{2}\sim\chi_{2}^{2}$ (goodness-of-fit).

As two of the four goodness of fit tests reject the hypothesis of linear logit
model, we differ from the conclusion that all trend test were significant.
More thoroughly, we should say it is not possible to consider either
homogeneity or increasing monotonicity in probabilities of pleural plaques in
function of exposure duration (ED), and neither in probabilities of asbestosis
in function of cumulative exposure index (CEI), since the $p$-values of
$Q_{n,\lambda}^{2}$ are very small. On the other hand, the linear logit model
assumption is verified for the other two models (pleural plaques probabilities
in function of ED and asbestosis in function of CEI), since the $p$-values of
$Q_{n,\lambda}^{2}$\ are very large and hence we can perform the test of
monotonicity for their probabilities. From Table \ref{t2} it can be seen that
in case of existing trend in probabilities, we have an increasing trend, since
$\pi_{11}(\widehat{\alpha},\widehat{\beta})<\pi_{21}(\widehat{\alpha
},\widehat{\beta})<\pi_{31}(\widehat{\alpha},\widehat{\beta})<\pi
_{i1}(\widehat{\alpha},\widehat{\beta})$, that is, $\widehat{\beta}>0$, and
hence we could consider the one sided test (\ref{test1}). In view that either
for the one sided or two sided tests we obtain very small $p$-values, the null
hypothesis is rejected and can we conclude that the probability of pleural
plaques increases as exposure index increases, and the probability of
asbestosis increases as the cumulative exposure index increases. It is
remarkable that the obtained $p$-values are in general either or very small or
very big, and this could be motivated by the fact that these conclusions are
obtained with a very large sample size. It is also interesting to mention that
even though two explanatory variables have failed to have a monotonic
influence in probability of disease, we think this is not influenced by the
linear logit link\ \ Even more, both diseases have been proven to increase in
probability when two different explanatory variables are increased.%

\begin{table}[htbp]  \small\tabcolsep3.8pt  \centering
\begin{tabular}
[c]{|ccc|ccc|ccc|}\hline\hline
$i$ & $x_{i}$ & $n_{i}$ & \multicolumn{3}{|c|}{pleural plaques} &
\multicolumn{3}{|c|}{asbestosis}\\\hline\hline
\multicolumn{3}{|c|}{ED} & $n_{i1}$ & $\pi_{i1}(\widetilde{\alpha},0)$ &
$\pi_{i1}(\widehat{\alpha},\widehat{\beta})$ & $n_{i1}$ & $\pi_{i1}%
(\widetilde{\alpha},0)$ & $\pi_{i1}(\widehat{\alpha},\widehat{\beta})$\\\hline
$1$ & \multicolumn{1}{l}{10.0} & 1321 & 179 & 0.1591 & 0.1214 & 71 & 0.0676 &
0.0550\\
$2$ & \multicolumn{1}{l}{24.5} & 1324 & 170 & 0.1591 & 0.1495 & 88 & 0.0676 &
0.0645\\
$3$ & \multicolumn{1}{l}{32.5} & 1408 & 226 & 0.1591 & 0.1673 & 100 & 0.0676 &
0.0704\\
$4$ & \multicolumn{1}{l}{43.0} & 1492 & 307 & 0.1591 & 0.1931 & 116 & 0.0676 &
0.0789\\\hline
\multicolumn{3}{|c|}{CEI} & $n_{i1}$ & $\pi_{i1}(\widetilde{\alpha},0)$ &
$\pi_{i1}(\widehat{\alpha},\widehat{\beta})$ & $n_{i1}$ & $\pi_{i1}%
(\widetilde{\alpha},0)$ & $\pi_{i1}(\widehat{\alpha},\widehat{\beta})$\\\hline
$1$ & 15.0 & 1306 & 150 & 0.1591 & 0.1121 & 50 & 0.0676 & 0.0465\\
$2$ & 41.0 & 1386 & 200 & 0.1591 & 0.1466 & 105 & 0.0676 & 0.0617\\
$3$ & 61.0 & 1380 & 228 & 0.1591 & 0.1692 & 99 & 0.0676 & 0.0720\\
$4$ & 85.0 & 1473 & 304 & 0.1591 & 0.2029 & 121 & 0.0676 &
0.0878\\\hline\hline
\end{tabular}
\caption{Data of the study in Paris et al. (2009) and MLEs of disease proportions.\label{t1}}%
\end{table}%
%

\begin{table}[htbp]  \small\tabcolsep0.8pt  \centering
\begin{tabular}
[c]{llllllll}\hline\hline
\multicolumn{8}{l}{ED vs. pleural plaques}\\\hline
$\lambda$ & $1$ & $-0.5$ & $0$ & $\frac{2}{3}$ & $1$ & $1.5$ & $2$\\
$T_{n,\lambda}$ & $28.2839$ & $28.6098$ & $29.0024$ & $29.6344$ & $29.9992$ &
$30.6104$ & $31.3022$\\
$p\mathrm{-}val(T_{n,\lambda})$ 1s & $5\times10^{-8}$ & $5\times10^{-8}$ &
$4\times10^{-8}$ & $3\times10^{-8}$ & $2\times10^{-8}$ & $2\times10^{-8}$ &
$1\times10^{-8}$\\
$p\mathrm{-}val(T_{n,\lambda})$ 2s & $1\times10^{-7}$ & $9\times10^{-8}$ &
$7\times10^{-8}$ & $5\times10^{-8}$ & $4\times10^{-8}$ & $3\times10^{-8}$ &
$2\times10^{-8}$\\
$Q_{n,\lambda}^{2}$ & $9.3539$ & $9.2922$ & $9.2358$ & $9.1689$ & $9.1389$ &
$9.0981$ & $9.0622$\\
$p\mathrm{-}val(T_{n,\lambda})$ & $0.0093$ & $0.0096$ & $0.0099$ & $0.0010$ &
$0.0010$ & $0.0010$ & $0.0011$\\\hline
$T_{n,CA}$ 1s & $5.3419$ &  &  &  &  &  & \\
$p\mathrm{-}val(T_{n,CA})$ 1s & $5\times10^{-8}$ &  &  &  &  &  & \\
$T_{n,CA}^{2}$ 2s & $28.536$ &  &  &  &  &  & \\
$p\mathrm{-}val(T_{n,CA})$ 2s & $9\times10^{-8}$ &  &  &  &  &  &
\\\hline\hline
\multicolumn{8}{l}{ED vs. asbestosis}\\\hline\hline
$\lambda$ & $1$ & $-0.5$ & $0$ & $\frac{2}{3}$ & $1$ & $1.5$ & $2$\\
$T_{n,\lambda}$ & $6.9869$ & $6.8712$ & $6.7664$ & $6.6430$ & $6.5878$ &
$6.5130$ & $6.4472$\\
$p\mathrm{-}val(T_{n,\lambda})$ 1s & $0.0041$ & $0.0044$ & $0.0046$ & $0.0050$
& $0.0051$ & $0.0053$ & $0.0055$\\
$p\mathrm{-}val(T_{n,\lambda})$ 2s & $0.0082$ & $0.0088$ & $0.0093$ & $0.0099$
& $0.0103$ & $0.0107$ & $0.0111$\\
$Q_{n,\lambda}^{2}$ & $0.1572$ & $0.1573$ & $0.1575$ & $0.1577$ & $0.1578$ &
$0.1580$ & $0.1582$\\
$p\mathrm{-}val(T_{n,\lambda})$ & $0.9244$ & $0.9243$ & $0.9242$ & $0.9242$ &
$0.9241$ & $0.9240$ & $0.9239$\\\hline
$T_{n,CA}$ 1s & $2.5842$ &  &  &  &  &  & \\
$p\mathrm{-}val(T_{n,CA})$ 1s & $0.0049$ &  &  &  &  &  & \\
$T_{n,CA}^{2}$ 2s & $6.6779$ &  &  &  &  &  & \\
$p\mathrm{-}val(T_{n,CA})$ 2s & $0.0098$ &  &  &  &  &  & \\\hline\hline
\multicolumn{8}{l}{CEI vs. pleural plaques}\\\hline\hline
$\lambda$ & $1$ & $-0.5$ & $0$ & $\frac{2}{3}$ & $1$ & $1.5$ & $2$\\
$T_{n,\lambda}$ & $46.5311$ & $46.1887$ & $45.9821$ & $45.9119$ & $45.9632$ &
$46.1473$ & $46.4600$\\
$p\mathrm{-}val(T_{n,\lambda})$ 1s & $<10^{-10}$ & $<10^{-10}$ & $<10^{-10}$ &
$<10^{-10}$ & $<10^{-10}$ & $<10^{-10}$ & $<10^{-10}$\\
$p\mathrm{-}val(T_{n,\lambda})$ 2s & $<10^{-10}$ & $<10^{-10}$ & $<10^{-10}$ &
$<10^{-10}$ & $<10^{-10}$ & $<10^{-10}$ & $<10^{-10}$\\
$Q_{n,\lambda}^{2}$ & $0.4247$ & $0.4246$ & $0.4246$ & $0.4245$ & $0.4245$ &
$0.4244$ & $0.4244$\\
$p\mathrm{-}val(T_{n,\lambda})$ & $0.8087$ & $0.8087$ & $0.8087$ & $0.8088$ &
$0.8088$ & $0.8088$ & $0.8088$\\\hline
$T_{n,CA}$ 1s & $6.7109$ &  &  &  &  &  & \\
$p\mathrm{-}val(T_{n,CA})$ 1s & $<10^{-10}$ &  &  &  &  &  & \\
$T_{n,CA}^{2}$ 2s & $45.0362$ &  &  &  &  &  & \\
$p\mathrm{-}val(T_{n,CA})$ 2s & $<10^{-10}$ &  &  &  &  &  & \\\hline\hline
\multicolumn{8}{l}{CEI vs. asbestosis}\\\hline\hline
$\lambda$ & $1$ & $-0.5$ & $0$ & $\frac{2}{3}$ & $1$ & $1.5$ & $2$\\
$T_{n,\lambda}$ & $24.0869$ & $21.9928$ & $20.1874$ & $18.1522$ & $17.2695$ &
$16.0884$ & $15.0561$\\
$p\mathrm{-}val(T_{n,\lambda})$ 1s & $0.46\times10^{-6}$ & $1.4\times10^{-6}$
& $3.5\times10^{-6}$ & $10.2\times10^{-6}$ & $16.2\times10^{-6}$ &
$30.2\times10^{-6}$ & $52.2\times10^{-6}$\\
$p\mathrm{-}val(T_{n,\lambda})$ 2s & $0.92\times10^{-6}$ & $2.7\times10^{-6}$
& $7.0\times10^{-6}$ & $20.4\times10^{-6}$ & $32.4\times10^{-6}$ &
$60.4\times10^{-6}$ & $104.4\times10^{-6}$\\
$Q_{n,\lambda}^{2}$ & $7.0258$ & $7.0810$ & $7.1465$ & $7.2506$ & $7.3100$ &
$7.4085$ & $7.5187$\\
$p\mathrm{-}val(T_{n,\lambda})$ & $0.0298$ & $0.0290$ & $0.0281$ & $0.0266$ &
$0.0259$ & $0.0246$ & $0.0233$\\\hline
$T_{n,CA}$ 1s & $4.43892$ &  &  &  &  &  & \\
$p\mathrm{-}val(T_{n,CA})$ 1s & $4.5\times10^{-6}$ &  &  &  &  &  & \\
$T_{n,CA}^{2}$ 2s & $19.7040$ &  &  &  &  &  & \\
$p\mathrm{-}val(T_{n,CA})$ 2s & $9.0\times10^{-6}$ &  &  &  &  &  &
\\\hline\hline
\end{tabular}
\caption{One sided and two sided hypothesis testing for monotone proportions in Paris et al. (2009) data.\label{t2}}%
\end{table}%
\newpage

\section{Monte Carlo Study\label{MC}}

Based on a Monte Carlo experiment with 200,000 replications, we compared the
exact type I error probability and power at the $0.05$ nominal significance
level, in order to evaluate the performance of the proposed procedure with the
CA test, within the asymptotic procedures framework. Both versions of the test
for monotonicity of probabilities, one-sided and two sided tests, were taken
into account. We selected as model ED vs. asbestosis from Section \ref{Ex},
that is $I=4$, $x_{1}=10.0$, $x_{2}=24.5$, $x_{3}=32.5$, $x_{4}=43.0$. Since
the sample size is big in the original data set and we are interested in the
performance of small and moderate sample sizes, three scenarios were considered:

\begin{itemize}
\item Scenario 1 (very small sample sizes and balanced): $n_{1}=n_{2}%
=n_{3}=n_{4}=25$.

\item Scenario 2 (small sample sizes and unbalanced): $n_{1}=30$, $n_{1}=40$,
$n_{1}=35$, $n_{1}=25$.

\item Scenario 3 (moderate sample sizes and unbalanced): $n_{1}=50$,
$n_{1}=60$, $n_{1}=55$, $n_{1}=45$.
\end{itemize}

Figures \ref{fig1} and \ref{fig2} show the type I error probabilities of the
tests as a function of the true value of the nuisance parameter when four
test-statistics are considered, $T_{n,\lambda}$, $\lambda\in\{0,\frac{2}%
{3},1\}$, $T_{n,CA}$. We moved 29 values of $\alpha$ until the whole interval
$(0,1)$ was covered for the unknown value of the probabilities, $p=\pi
_{11}(\alpha,0)=\pi_{21}(\alpha,0)=\pi_{31}(\alpha,0)=\pi_{41}(\alpha,0)$. In
Figure \ref{fig2} we can see symmetry with respect to $p=0.5$, actually it is
exactly the same to perform a two hypothesis testing when the true value is
$p_{0}$ or when the true value is $1-p_{0}$, and the role of successful events
and failures can be switched. In Figure \ref{fig1} we cannot see symmetry with
respect to $p=0.5$, and the reason is related to the alternative hypotheses
since the small proportion of samples that we reject tend to verify $\pi
_{11}(\alpha,0)<\pi_{21}(\alpha,0)<\pi_{31}(\alpha,0)<\pi_{41}(\alpha,0)$
seems to be different on the left or right side of $p=0.5$. That is, if the
true value is $p_{0}$ and tends to occur $\pi_{11}(\alpha,0)<\pi_{21}%
(\alpha,0)<\pi_{31}(\alpha,0)<\pi_{41}(\alpha,0)$ then for $1-p_{0}$ tends
occur $1-\pi_{11}(\alpha,0)>1-\pi_{21}(\alpha,0)>1-\pi_{31}(\alpha
,0)>1-\pi_{41}(\alpha,0)$. It is suppose that asymptotically it should not be
difference, but with small and moderate sample it is. For all scenario and for
the two types of contrasts the behavior is quite unstable in the boundaries,
that is when $p$ is close either to $0$ or $1$. For such a case there is a
solution based on the \textquotedblleft pooling design\textquotedblright\ (see
Tebbs and Bilder \cite{14b} for more details) but it goes out from the scope
of the current paper. In Figures \ref{fig1} and \ref{fig2} it is clearly seen
that the LR ($T_{n,0}=G_{n}^{2}$) and CA ($T_{n,CA}$) tests tends to be above
the nominal size but the behavior of the the CA test is much better than the
LR since it remains closer to the nominal size. On the other hand,
$T_{n,\lambda}$, $\lambda\in\{\frac{2}{3},1\}$ tests tends to be below the
nominal size but case $\lambda=\frac{2}{3}$ is usually closer to nominal size
an a little bit flatter. We analyzed also other values of $\lambda$ and we did
not find better choices than $\lambda\in\{\frac{2}{3},1\}$. In Figure
\ref{fig3} the power function the best-test divergence base test statistic in
type I error, $T_{n,2/3}$, and $T_{n,CA}$, are plotted in Scenario 1 (the
power function in the other scenarios are very similar). We can see that the
CA test has in general a little bit higher power than $T_{n,2/3}$, as it
usually happens with test-statistics with higher value of the exact type I
error. Finally, as expected the one sided test has much better power than the
two sided one when $\beta>0$, while when $\beta<0$ the two sided test has
better power. As expected it is concluded that in practice, it is strongly
recommended using the one-sided one for dose-response model when it is logical
to assume that the trend is null or monotonic with a determined direction.%

\begin{figure}[htbp]  \centering
\begin{tabular}
[c]{c}%
\raisebox{0.0934in}{\fbox{\includegraphics[
height=2.5624in,
width=3.8441in
]%
{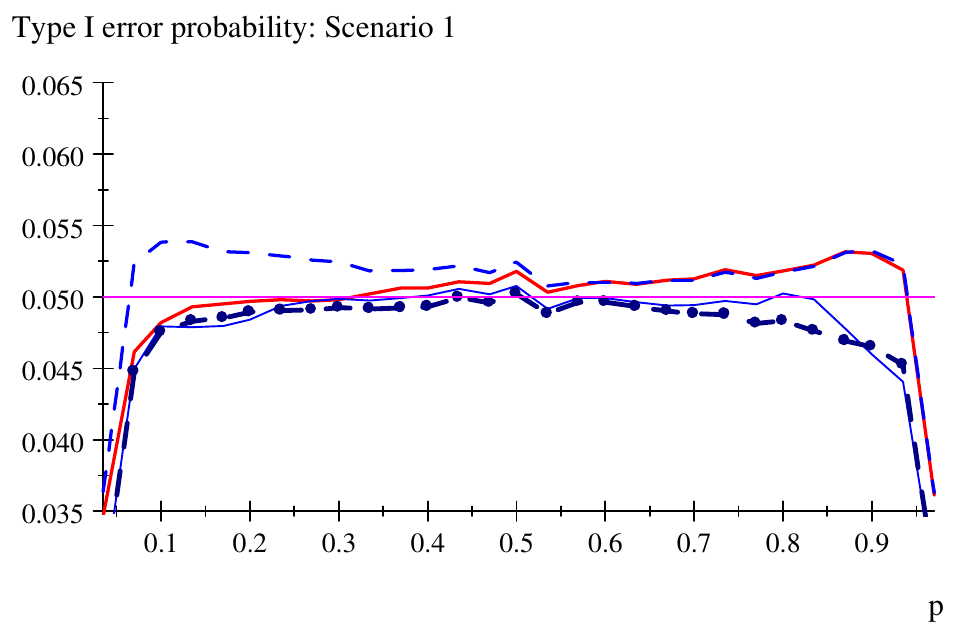}%
}}
\\
\multicolumn{1}{l}{%
{\fbox{\includegraphics[
height=2.5598in,
width=3.845in
]%
{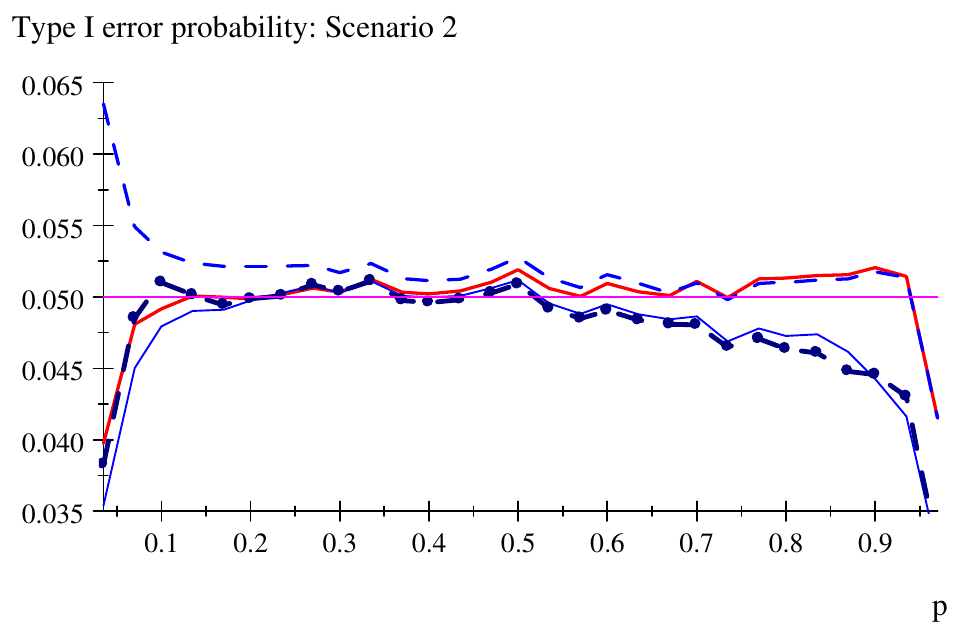}%
}}
\smallskip}\\
\multicolumn{1}{l}{%
{\fbox{\includegraphics[
height=2.5598in,
width=3.845in
]%
{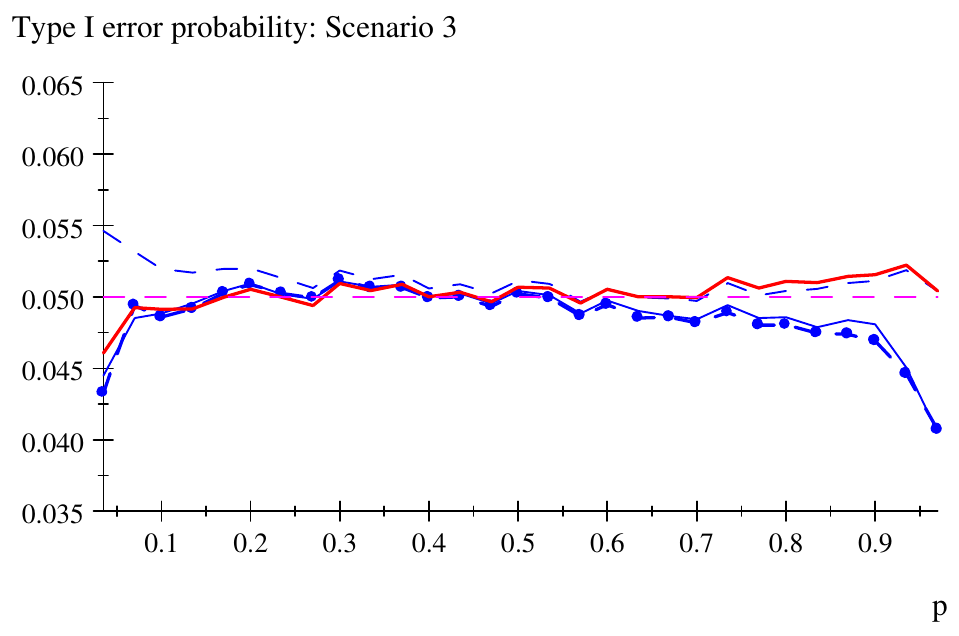}%
}}
}\\%
\raisebox{-0cm}{\includegraphics[
height=0.9797cm,
width=5.1796cm
]%
{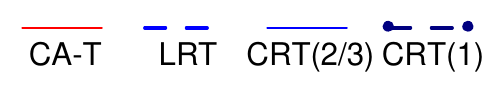}%
}
\end{tabular}
\caption{Exact type I error for one-sided test of trends in probabilities.\label{fig1}}%
\end{figure}%
%

\begin{figure}[htbp]  \centering
\begin{tabular}
[c]{l}%
\raisebox{0.0934in}{\fbox{\includegraphics[
height=2.5624in,
width=3.8441in
]%
{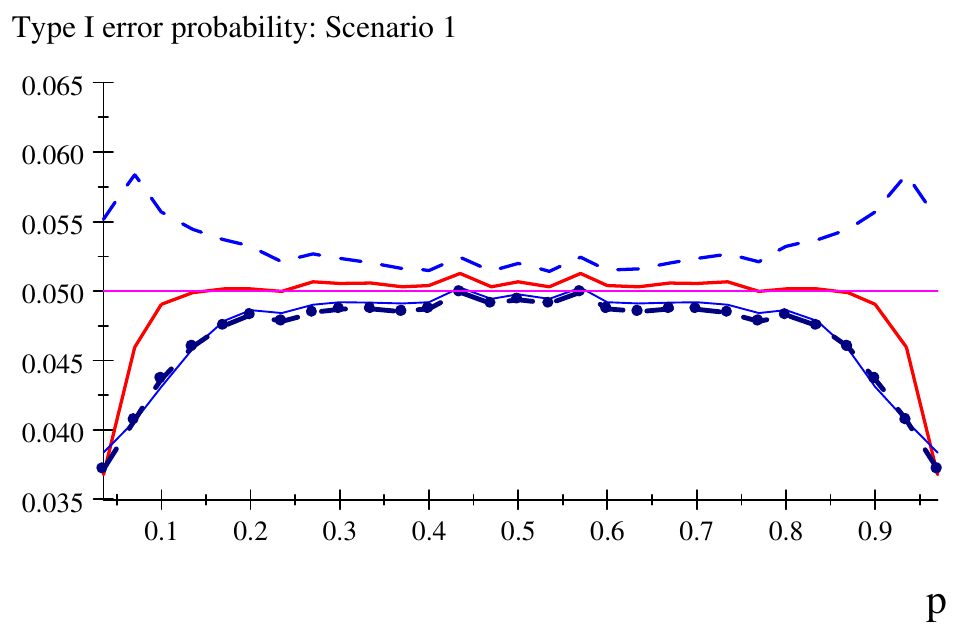}%
}}
\\%
{\fbox{\includegraphics[
height=2.5598in,
width=3.845in
]%
{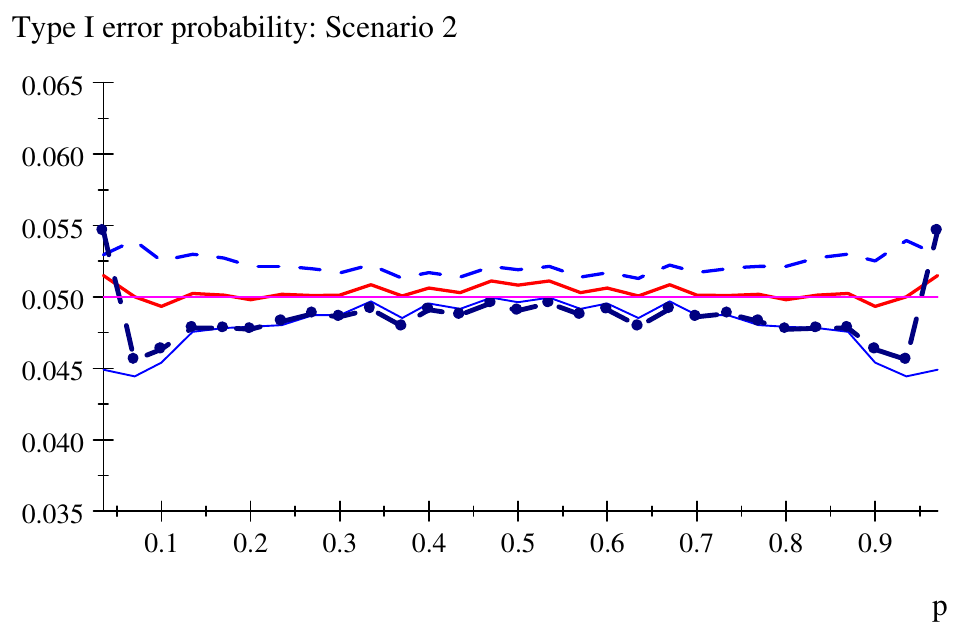}%
}}
\smallskip\\%
{\fbox{\includegraphics[
height=2.5598in,
width=3.845in
]%
{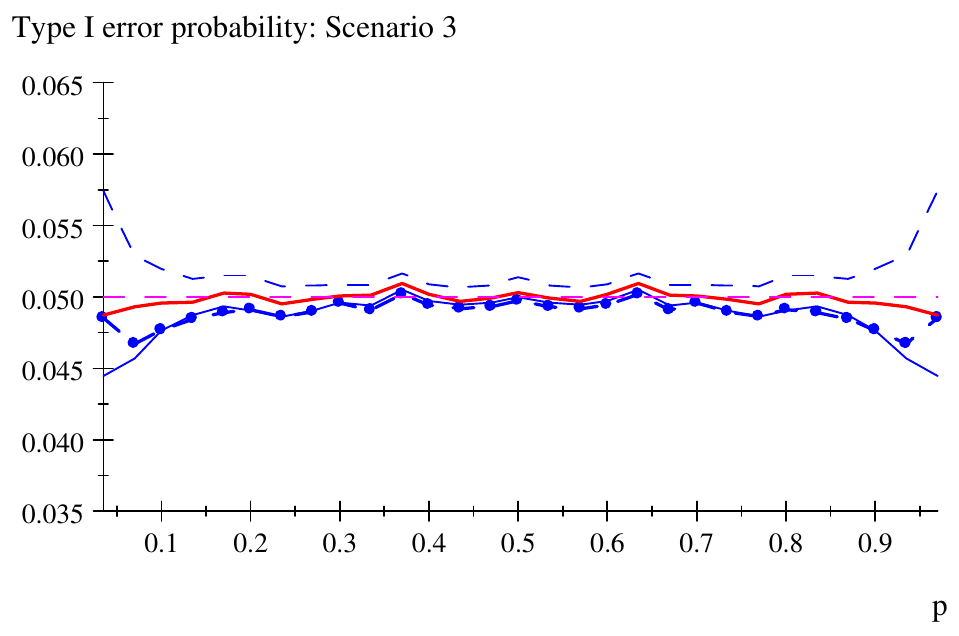}%
}}
\\
\multicolumn{1}{c}{%
\raisebox{-0cm}{\includegraphics[
height=0.9797cm,
width=5.1796cm
]%
{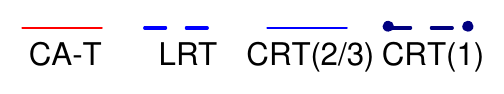}%
}
}%
\end{tabular}
\caption{Exact type I error for two-sided test of trends in probabilities.\label{fig2}}%
\end{figure}%
%

\begin{figure}[htbp]  \centering
\begin{tabular}
[c]{l}%
{\fbox{\includegraphics[
height=2.5598in,
width=3.845in
]%
{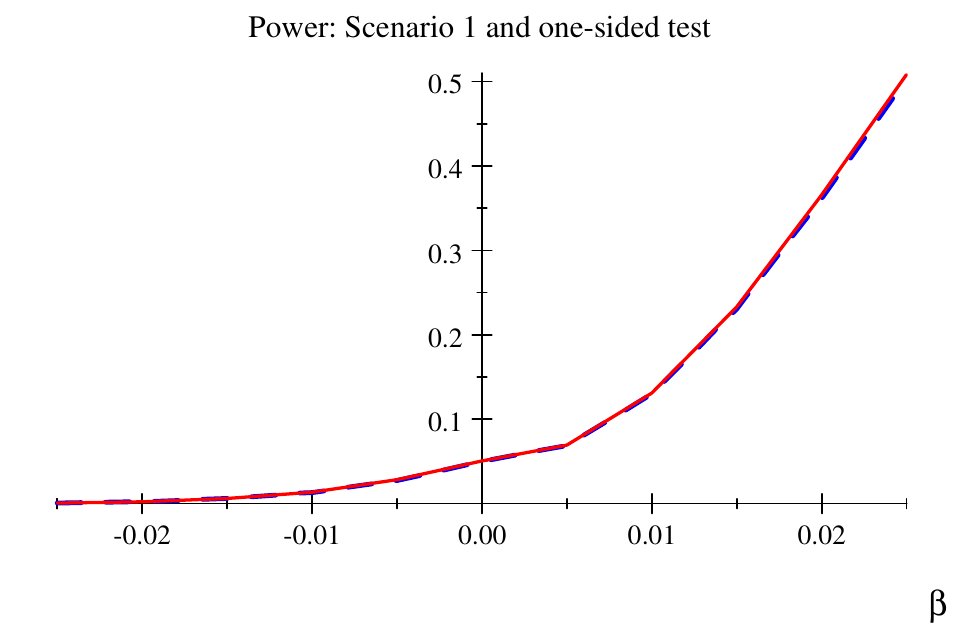}%
}}
\smallskip\\%
{\fbox{\includegraphics[
height=2.5598in,
width=3.845in
]%
{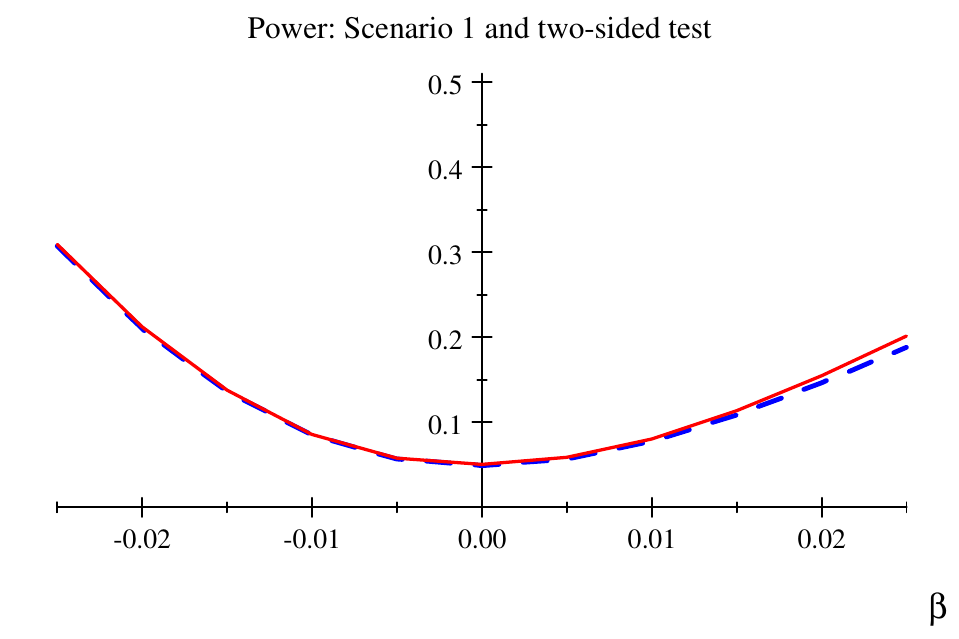}%
}}
\\
\multicolumn{1}{c}{%
\raisebox{-0cm}{\includegraphics[
height=0.9929cm,
width=3.6706cm
]%
{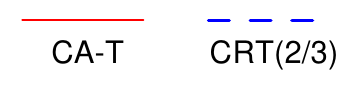}%
}
}%
\end{tabular}
\caption{Exact power for one sided and two sided tests of trends in probabilities.\label{fig2}}%
\end{figure}%
\newpage

\newpage%

\appendix

\section{Appendix: Proof of Theorem \ref{Th1}\label{ApA}}

In Dardanoni and Forcina \cite{15} generalized models with linear constraints
were used as tool for unifying different kind of order restricted
probabilities when there is no an underlying model. By following a similar
idea, using a saturated loglinear model and in addition, the case of a unique
multi-sample in Mart\'{\i}n and Balakrishnan \cite{16}, it is possible to get
the result we need by considering three models, the saturated model
(nonparametric), linear logit model (adding constraints on the nonparametric
model) and independence model (the loglinear model without interaction
parameter ). The estimated probabilities of these three models are going to be
$\widehat{\boldsymbol{p}}$, $\boldsymbol{p}(\widehat{\alpha},\widehat{\beta}%
)$, $\boldsymbol{p}(\widehat{\alpha},0)$ respectively. Since $\log\frac
{p_{i1}(\alpha,\beta)}{p_{i2}(\alpha,\beta)}=\log\frac{\pi_{i1}(\alpha,\beta
)}{\pi_{i2}(\alpha,\beta)}$, we can express the linear logit model either for
the joint probabilities or conditional probabilities but we shall focus on
joint probabilities%
\[
\log\frac{p_{i1}(\alpha,\beta)}{p_{i2}(\alpha,\beta)}=\alpha+\beta x_{i}.
\]
The joint probabilities in terms of a saturated loglinear model are given by%
\begin{align*}
\log p_{i1}(\alpha,\beta)  &  =\theta_{1(i)}+\theta_{2(1)}+\theta_{12(i1)},\\
\log p_{i2}(\alpha,\beta)  &  =\theta_{1(i)},
\end{align*}
where we have considered the constraints $\theta_{2(2)}=\theta_{12(i2)}=0$ to
avoid overparametrization and without any loss of generality we shall consider
$\theta_{12(I1)}=\frac{x_{I}}{x_{1}}\theta_{12(11)}$. Once we get the values
of $\alpha,\beta$, the terms $\theta_{1(i)}$, $i=1,...,I$, are calculated
taking into account $p_{i1}(\alpha,\beta)+p_{i2}(\alpha,\beta)=\frac{n_{i}}%
{n}$, $i=1,...,I$. If we take the ratio of both logarithm of probabilities we
have%
\[
\log\frac{p_{i1}(\alpha,\beta)}{p_{i2}(\alpha,\beta)}=\theta_{2(1)}%
+\theta_{12(i1)},
\]
which means that $\theta_{2(1)}=\alpha$, $\theta_{12(i1)}=\beta x_{i}$, and
thus the linear logit model can be reparametrized as a saturated loglinear
model subject to the linear constraint
\begin{equation}
x_{1}\theta_{12(i1)}-x_{i}\theta_{12(11)}=0,\quad i=2,...,I-1 \label{res}%
\end{equation}
(the equation is also true for $i=I$ but it was true for the saturated model).
In matrix notation the saturated loglinear model is given by%
\[
\log\boldsymbol{p}(\boldsymbol{\theta})=\boldsymbol{W}_{1}\boldsymbol{\theta
}_{1}+\boldsymbol{W}_{12}\boldsymbol{\theta}_{12}+\boldsymbol{w}_{2}%
\theta_{2(1)}=\boldsymbol{W}_{1}\boldsymbol{\theta}_{1}+\boldsymbol{W\theta},
\]
where $\boldsymbol{\theta}_{1}=(\theta_{1(1)},...,\theta_{1(I-1)})^{T}$,
$\boldsymbol{\theta}_{12}=(\theta_{12(11)},...,\theta_{12(I-1,1)})^{T}$,
$\boldsymbol{\theta}=(\boldsymbol{\theta}_{12}^{T},\theta_{2(1)})^{T}$,%
\[
\boldsymbol{W}_{1}=\boldsymbol{I}_{I}\otimes\boldsymbol{1}_{2},\quad
\boldsymbol{W}_{12}=%
\begin{pmatrix}
\boldsymbol{I}_{I-1}\\
\frac{x_{I}}{x_{1}}\boldsymbol{e}_{1}^{T}%
\end{pmatrix}
\otimes%
\begin{pmatrix}
1\\
0
\end{pmatrix}
,\quad\boldsymbol{w}_{2}=\boldsymbol{1}_{I}\otimes%
\begin{pmatrix}
1\\
0
\end{pmatrix}
,\quad\boldsymbol{W}=(\boldsymbol{W}_{12},\boldsymbol{w}_{2}),
\]
$\otimes$ is the Kronecker product (see Chapter 16 of Harville \cite{17}),
$\boldsymbol{I}_{a}$\ is the the identity matrix of order $a$, $\boldsymbol{e}%
_{i}$\ is the vector of zeros and $1$ in the $i$-th position and
$\theta_{1(i)}=\log(\frac{n_{i}}{n})-\log(\boldsymbol{1}_{2}^{T}%
\exp\{(\boldsymbol{I}_{2}\otimes\boldsymbol{e}_{i}^{T})\boldsymbol{W\theta
}\})$. The last expression is similar to the formula for getting the intercept
in a product-multinomial sampling. Condition (\ref{res}) in matrix notation is
given by $(x_{1}\boldsymbol{1}_{I-2},-diag\{x_{i}\}_{i=2}^{I-1},\boldsymbol{0}%
_{I-2})\boldsymbol{\theta}=\boldsymbol{0}_{I-2}$. In this framework, for the
linear logit model, (\ref{test1}) is equal to%
\begin{align*}
H_{0}  &  :\theta_{12(11)}=0\text{, }x_{1}\theta_{12(i1)}-x_{i}\theta
_{12(11)}=0,\quad i=2,...,I-1,\\
H_{1}  &  :\theta_{12(11)}>0\text{, }x_{1}\theta_{12(i1)}-x_{i}\theta
_{12(11)}=0,\quad i=2,...,I-1,
\end{align*}
for the saturated loglinear model. For the one-sided test we have three
parametric spaces%
\begin{align*}
\Omega(E)  &  =\{\boldsymbol{\theta}\in%
\mathbb{R}
^{I}:\boldsymbol{e}_{1}^{T}\boldsymbol{\theta}\leq0,\quad(x_{1}\boldsymbol{1}%
_{I-2},-diag\{x_{i}\}_{i=2}^{I-1},\boldsymbol{0}_{I-2})\boldsymbol{\theta
}=\boldsymbol{0}_{I-2}\},\\
\Theta(F)  &  =\{\boldsymbol{\theta}\in%
\mathbb{R}
^{I}:(x_{1}\boldsymbol{1}_{I-2},-diag\{x_{i}\}_{i=2}^{I-1},\boldsymbol{0}%
_{I-2})\boldsymbol{\theta}=\boldsymbol{0}_{I-2}\},\\
\Theta &  =\Theta(\varnothing)=%
\mathbb{R}
^{I},
\end{align*}
such that $\Theta(E)\subset\Omega(F)\subset\Theta$, this statistical problem
can be placed in the nesting framework of the paper Mart\'{\i}n and
Balakrishnan \cite{16}. In terms of the hypothesis testing formulation given
in Mart\'{\i}n and Balakrishnan \cite[Section 2]{16}, the one sided hypothesis
testing $H_{0}$: $\boldsymbol{\theta}\in\Omega(E)$ vs. $H_{1}$:
$\boldsymbol{\theta}\in\Theta(F)-\Omega(E)$ is (12), the set of indices that
the restriction is active is the same for for the null and alternative
hypothesis, $E=F=\{i\in\{2,...,I-1\}:h_{i}(\boldsymbol{\theta})=0\}$, with
$h_{i}(\boldsymbol{\theta})=x_{1}\theta_{12(i1)}-x_{i}\theta_{12(11)}$,
$i=2,...,I-1$. The LR test match formula (20) in Mart\'{\i}n and Balakrishnan
\cite[Section 2]{16} and this is a particular test-statistics of the second
test-statistic given in Definition 16 for which the same idea of (54) in the
simulation study is used. Hence, the asymptotic distribution for one-sided
test (\ref{test1}) is obtained from Theorem 17 in Mart\'{\i}n and Balakrishnan
\cite[Section 2]{16}. The maximum number of positions in parameter $\beta$
where $\beta=0$ is reached, is $1$ (if $\widehat{\beta}>0$, then
$\widetilde{\beta}=0$), which means that the chi-bar square distribution of
test (\ref{test1}) has two summands with weights equals $\frac{1}{2}$. For the
two-sided test we have three parametric spaces%
\begin{align*}
\Theta(E^{\prime})  &  =\{\boldsymbol{\theta}\in%
\mathbb{R}
^{I}:\boldsymbol{e}_{1}^{T}\boldsymbol{\theta}=0,\quad(x_{1}\boldsymbol{1}%
_{I-2},-diag\{x_{i}\}_{i=2}^{I-1},\boldsymbol{0}_{I-2})\boldsymbol{\theta
}=\boldsymbol{0}_{I-2}\},\\
\Theta(F^{\prime})  &  =\{\boldsymbol{\theta}\in%
\mathbb{R}
^{I}:(x_{1}\boldsymbol{1}_{I-2},-diag\{x_{i}\}_{i=2}^{I-1},\boldsymbol{0}%
_{I-2})\boldsymbol{\theta}=\boldsymbol{0}_{I-2}\},\\
\Theta &  =\Theta(\varnothing)=%
\mathbb{R}
^{I},
\end{align*}
such that $\Theta(E^{\prime})\subset\Theta(F^{\prime})\subset\Theta$, this
statistical problem can be placed in the nesting framework of the paper
Mart\'{\i}n and Balakrishnan \cite{16}. In terms of the hypothesis testing
formulation given in Mart\'{\i}n and Balakrishnan \cite[Section 2]{16}, the
two sided hypothesis testing $H_{0}^{\prime}$: $\boldsymbol{\theta}\in
\Theta(E^{\prime})$ vs. $H_{1}^{\prime}$: $\boldsymbol{\theta}\in
\Theta(F^{\prime})-\Theta(E^{\prime})$ is (10), the set of indices that the
restriction is active for the null hypothesis is $E^{\prime}=\{i\in
\{1,2,...,I-1\}:h_{i}(\boldsymbol{\theta})=0\}$, with $h_{1}%
(\boldsymbol{\theta})=\boldsymbol{e}_{1}^{T}\boldsymbol{\theta}$,
$h_{i}(\boldsymbol{\theta})=x_{1}\theta_{12(i1)}-x_{i}\theta_{12(11)}$,
$i=2,...,I-1$, and $F^{\prime}=F$ for the alternative hypothesis. The LR test
match formula (18) in Mart\'{\i}n and Balakrishnan \cite[Section 2]{16} and
this is a particular test-statistics of the second test-statistic given in
Definition 7. Hence, the asymptotic distribution for two-sided test
(\ref{test2}) is obtained from Theorem 8 in Mart\'{\i}n and Balakrishnan
\cite[Section 2]{16}. Note that $\mathrm{card}(E^{\prime})-\mathrm{card}%
(F^{\prime})=1$,\ which means that the chi-square distribution of
(\ref{test2})\ has one degree of freedom.

\section{Appendix: Proof of Theorem \ref{Th2}\label{ApB}}

We can follow the same idea of the previous proof. The asymptotic
distributions is obtained from Theorem 8 in Mart\'{\i}n and Balakrishnan
\cite[Section 2]{16}. The parametric spaces are%
\begin{align*}
\Theta(E^{\prime\prime})  &  =\{\boldsymbol{\theta}\in%
\mathbb{R}
^{I}:(x_{1}\boldsymbol{1}_{I-2},-diag\{x_{i}\}_{i=2}^{I-1},\boldsymbol{0}%
_{I-2})\boldsymbol{\theta}=\boldsymbol{0}_{I-2}\},\\
\Theta(F^{\prime\prime})  &  =\Theta=\Theta(\varnothing)=%
\mathbb{R}
^{I},
\end{align*}
such that $\Theta(E^{\prime\prime})\subset\Theta(F^{\prime\prime})=\Theta$. In
terms of the hypothesis testing formulation given in Mart\'{\i}n and
Balakrishnan \cite[Section 2]{16}, the goodness of fit hypothesis testing
$H_{0}$: $\boldsymbol{\theta}\in\Theta(E^{\prime\prime})$ vs. $H_{1}$:
$\boldsymbol{\theta}\in\Theta(F^{\prime\prime})-\Theta(E^{\prime\prime})$ is
(10), the set of indices that the restriction is active for the null
hypothesis is $E^{\prime\prime}=E=F$ and $F^{\prime\prime}=\varnothing$ (the
saturated model does not considers constraints)\ for the alternative
hypothesis. The LR test match formula (18) in Mart\'{\i}n and Balakrishnan
\cite[Section 2]{16} and this is a particular test-statistics of the second
test-statistic given in Definition 7. Hence, the asymptotic distribution for
two-sided test (\ref{test2}) is obtained from Theorem 8 in Mart\'{\i}n and
Balakrishnan \cite[Section 2]{16}. Note that $\mathrm{card}(E^{\prime\prime
})-\mathrm{card}(F^{\prime\prime})=I-2$,\ which means that the chi-square
distribution has $I-2$ degrees of freedom under the hypothesis that the linear
logit model is true.

\end{document}